\documentclass[a4paper,orivec,envcountsame]{llncs}
\usepackage{misc0}
 
\newcommand{\inputs}{\ensuremath{\Imf}\xspace} 
\newcommand{\axiomsset}{\ensuremath{\Amf}\xspace} 
\renewcommand{\input}{\ensuremath{\Imc}\xspace} 
\newcommand{\axioms}{\ensuremath{\Omc}\xspace} 
\newcommand{\valuation}{\ensuremath{\Vmc}\xspace} 
 
 
\newcommand{\lab}[1]{\ensuremath{{\sf lab}({#1})}\xspace} 
\newcommand{\cproperty}{\ensuremath{\Pmc}\xspace} 
\newcommand{\consequence}{\ensuremath{\alpha}\xspace}

\newcommand{\pinpointing}[1]{\ensuremath{\varphi_{#1}}\xspace} 
\newcommand{\algorithm}{\ensuremath{A}\xspace}

\newcommand{\pre}[1]{\ensuremath{{\sf pre}({#1})}\xspace} 
\newcommand{\axm}[1]{\ensuremath{{\sf ax}_{#1}}\xspace} 
\newcommand{\axp}[1]{\ensuremath{{\sf ax}'_{#1}}\xspace} 
\newcommand{\res}[1]{\ensuremath{{\sf res}({#1})}\xspace} 
\newcommand{\fm}[2]{\ensuremath{{\sf fm}({#1},{#2})}\xspace} 

\newcommand{\Texa}{\ensuremath{\Tmc_{\sf exa}}\xspace}

\newcommand{\elplus}{\ensuremath{\EL^+}\xspace}
\newcommand{\wto}{\ensuremath{\rightharpoonup}\xspace}

\begin{document}
\title{Consequence-Based Axiom Pinpointing}
\author{Ana Ozaki \and Rafael Pe\~naloza}
\institute{KRDB Research Centre, Free University of Bozen-Bolzano, Italy}
\maketitle{}

\begin{abstract}
Axiom pinpointing refers to the problem of finding the axioms in an ontology that are relevant
for understanding a given entailment or consequence. One approach for axiom pinpointing,
known as \emph{glass-box}, is to modify a classical decision procedure for the entailments
into a method that computes the solutions for the pinpointing problem.
Recently, consequence-based decision procedures have been proposed as a  
promising alternative for tableaux-based reasoners  
for standard ontology languages. 
In this work, we present a general framework to extend 
consequence-based algorithms with axiom pinpointing. 
\end{abstract}
 
\section{Introduction}

Ontologies are now widely used in various domains such as medicine \cite{rector1995terminology,
price2000snomed,ruch2008automatic},
biology \cite{Sidhu05proteinontology}, 
chemistry \cite{degtyarenko2008chebi},
geography \cite{mcmaster2004research,kuipers1996ontological} and many others~\cite{article}, to represent 
conceptual knowledge in a formal and easy to understand manner. 
It is a multi-task effort to construct and maintain such ontologies, 
often containing thousands of concepts. 
As these ontologies increase in size and complexity, it becomes more and more challenging 
for an ontology engineer to understand which parts of the ontology cause 
a certain consequence to be entailed. If, for example, this consequence is an error, the ontology engineer 
would want to understand its precise causes, and correct it with minimal disturbances to the rest of the 
ontology.

To support this task, a technique known as
\emph{axiom pinpointing} was
introduced in~\cite{Schlobach:2003:NRS:1630659.1630712}. 
The goal of axiom pinpointing is to identify the minimal sub-ontologies (w.r.t.\ set inclusion) 
that entail a given consequence; we call these sets \emph{MinAs}. 
There are two basic approaches to axiom pinpointing.
The \emph{black-box approach}~\cite{parsia-www05} uses repeated calls to an unmodified decision procedure
to find these MinAs. The \emph{glass-box approach}, on the other hand, modifies the 
decision algorithm to generate the MinAs during one execution. In reality, glass-box methods
do not explicitly compute the MinAs, but rather a compact representation of them known as 
the \emph{pinpointing formula}.
In this setting, each axiom of the ontology is labelled with a unique 
propositional symbol. The pinpointing formula is a (monotone) Boolean 
formula, satisfied exactly by those valuations which evaluate to true 
the labels of the axioms in the ontology which cause the entailment of 
the consequence. Thus, the formula points out to the user the relevant parts of the 
ontology for the entailment of a certain consequence, where disjunction means 
alternative use of the axioms and conjunction means that 
the axioms are jointly used. 

Axiom pinpointing can be used to enrich a decision procedure for 
entailment checking by further presenting to the user the 
axioms which cause a certain consequence. 
Since glass-box methods modify an existing decision procedure, they require a specification
of the decision method to be studied. Previously, general methods for extending tableaux-based
and automata-based decision procedures to axiom pinpointing have been studied in 
detail~\cite{DBLP:journals/jar/BaaderP10,DBLP:journals/logcom/BaaderP10}.  
Classically, automata-based decision procedures often exhibit optimal worst-case complexity, but
the most efficient reasoners for standard ontology languages are tableaux-based. 
When dealing with pinpointing extensions one observes a similar behaviour: the automata-based
axiom pinpointing approach preserves the complexity of the original method, while tableau-based axiom
pinpointing is not even guaranteed to terminate in general. However, the latter are more goal-directed 
and lead to a better run-time in practice.

A different kind of reasoning procedure that is gaining interest is known as the 
consequence-based method. In this setting, rules are applied to derive explicit consequences from
previously derived knowledge.
Consequence-based decision procedures often enjoy optimal worst-case complexity 
and, more recently, they have  been presented as a 
promising alternative for tableaux-based reasoners  
for standard ontology languages~\cite{DBLP:conf/dlog/CucalaGH17,Kaza09,horrocks-kr16,DBLP:journals/ai/SimancikMH14,SiKH-IJCAI11,DBLP:conf/csemws/WangH12,DBLP:conf/dlog/KazakovK14a}.  
Consequence-based algorithms have been previously described as simple variants of tableau 
algorithms~\cite{baader-ki07}, and as syntactic variants of automata-based methods~\cite{HuPe17}.
They share the positive complexity bounds of automata, and the goal-directed nature of tableaux.

In this work, we present a general approach to produce axiom pinpointing 
extensions of consequence-based algorithms. Our driving example and use case is 
the extension of the consequence-based algorithm for entailment checking 
for the prototypical ontology language \ALC~\cite{DBLP:conf/dlog/KazakovK14a}. 
We show that the pinpointing extension does not change the ExpTime 
complexity of the consequence-based algorithm for \ALC. 

\section{Preliminaries}

We briefly introduce the notions needed for this paper. We are interested in the problem of understanding the 
causes for a consequence to follow from an ontology. We consider   
an abstract notion of ontology and consequence relation. For the sake of clarity, however, we instantiate these
notions to the description logic \ALC.  

\subsection{Axiom Pinpointing}

To keep the discourse as general as possible, we consider an \emph{ontology language} to define a class
\Amf of \emph{axioms}. An \emph{ontology} is then a finite set of axioms; that is, a finite subset of \Amf. We 
denote the set of all ontologies as \Omf. 
A \emph{consequence property} (or \emph{c-property} for short) is a binary relation
$\Pmc\subseteq\Omf\times\Amf$ that relates ontologies to axioms. 
If $(\Omc,\alpha)\in\Pmc$, we say that $\alpha$ is a \emph{consequence} of \Omc or alternatively, that
\Omc \emph{entails} $\alpha$.

We are only interested in relations that are
monotonic in the sense that for any two ontologies $\Omc,\Omc'\in\Omf$ and axiom $\alpha\in\Amf$ such that
$\Omc\subseteq\Omc'$, if $(\Omc,\alpha)\in\Pmc$ then $(\Omc',\alpha)\in\Pmc$. In other words, adding more
axioms to an ontology will only increase the set of axioms that are entailed from it. For the rest of this paper 
whenever we speak about a c-property, we implicitly assume that it is monotonic in this sense.

Notice that our notions of ontology and consequence property differ from previous work. 
In~\cite{DBLP:journals/jar/BaaderP10,DBLP:journals/logcom/BaaderP10}, c-properties are defined using two
different types of statements and ontologies are allowed to require additional structural constraints. The former
difference is just syntactic and does not change the generality of our approach. In the latter case, our setting
becomes slightly less expressive, but at the benefit of simplifying the overall notation and explanation of 
our methods. As we notice at the end of this paper, our results can be easily extended to the more general setting 
from~\cite{DBLP:journals/jar/BaaderP10,DBLP:journals/logcom/BaaderP10}.

When dealing with ontology languages, one is usually interested in deciding whether an ontology \Omc entails an 
axiom $\alpha$; that is, whether $(\Omc,\alpha)\in\Pmc$. In axiom pinpointing, we are more interested in the
more detailed question of \emph{why} it is a consequence. More precisely, we want to find the minimal 
(w.r.t.\ set inclusion) sub-ontologies $\Omc'\subseteq\Omc$ such that $(\Omc',\alpha)\in\Pmc$ still holds. These
subsets are known as \emph{MinAs}~\cite{DBLP:journals/jar/BaaderP10,DBLP:journals/logcom/BaaderP10},
\emph{justifications}~\cite{parsia-iswc07}, or 
\emph{MUPS}~\cite{Schlobach:2003:NRS:1630659.1630712}---among many other 
names---in the literature. Rather than enumerating all these sub-ontologies explicitly, one approach is to 
compute a formula, known as the pinpointing formula, that encodes them.

Formally, suppose that every axiom $\alpha\in\Amf$ is associated with a unique propositional variable
$\lab{\alpha}$, and let $\lab\Omc$ be the set of all the propositional variables corresponding to axioms
in the ontology \Omc. 
A \emph{monotone Boolean formula} $\phi$ over ${\sf lab}(\axioms)$ 
is a Boolean formula using only variables in  ${\sf lab}(\axioms)$ and the 
connectives for  conjunction ($\wedge$) and disjunction  ($\vee$). The constants
$\top$ and $\bot$, always evaluated to true and false, respectively, are 
also monotone Boolean formulae. We identify a propositional 
valuation with the set of variables which are true in it. 
For a valuation \valuation and a set of axioms \axioms, the \emph{\valuation-projection of \axioms}
is the set $\axioms_\valuation:=\{\alpha\in\axioms\mid 
\lab{\alpha}\in\valuation\}$.
Given a c-property $\cproperty$ and an axiom $\alpha\in\Amf$,
a monotone Boolean formula $\phi$ over ${\sf lab}(\axioms)$
is called a \emph{pinpointing formula} for $\Gamma$ w.r.t $\cproperty$
if for every valuation $\valuation\subseteq \lab{\axioms}$:
\begin{align*}
(\axioms_\valuation,\alpha) \in\cproperty\text{ iff } \valuation \text{ satisfies } \phi. 
\end{align*}

\subsection{Description Logics} 

Description logics (DLs)~\cite{BCNMP03} are a family of knowledge representation formalisms that have been
successfully applied to represent the knowledge of many application domains, in particular from the life sciences~\cite{article}. 
We briefly introduce, as a prototypical example, \ALC, which is the smallest propositionally closed description 
logic.

Given two disjoint sets $N_C$ and $N_R$ of \emph{concept names} and \emph{role names}, respectively, 
\ALC \emph{concepts} are defined through the grammar rule:
\[
C ::= A \mid \neg C \mid C\sqcap C \mid \exists r.C, 
\]
where $A\in N_C$ and $r\in N_R$. A \emph{general concept inclusion} (GCI) is an expression of the form
$C\sqsubseteq D$, where $C,D$ are \ALC concepts. A \emph{TBox} is a finite set of GCIs.

The semantics of this logic is given in terms of \emph{interpretations} which are pairs of the form 
$\Imc=(\Delta^\Imc,\cdot^\Imc)$ where $\Delta^\Imc$ is a finite set called the \emph{domain}, and $\cdot^\Imc$
is the \emph{interpretation function} that maps every concept name $A\in N_C$ to a set 
$A^\Imc\subseteq\Delta^\Imc$ and every role name $r\in N_R$ to a binary relation 
$r^\Imc\subseteq\Delta^\Imc\times\Delta^\Imc$. The interpretation function is extended to arbitrary \ALC
concepts inductively as shown in Figure~\ref{fig:semantics}.
\begin{figure}[tb]
\begin{align*}
(\neg C)^\Imc := {} & \Delta^\Imc\setminus C^\Imc \\
(C\sqcap D)^\Imc := {} & C^\Imc\cap D^\Imc \\
(\exists r.C)^\Imc := {} & \{d\in\Delta^\Imc\mid \exists e\in C^\Imc.(d,e)\in r^\Imc\}
\end{align*}
\caption{Semantics of \ALC}
\label{fig:semantics}
\end{figure}
Following this semantics, we introduce the usual abbreviations $C\sqcup D:=\neg(\neg C\sqcap \neg D)$,
$\forall r.C:=\neg(\exists r.\neg C)$, $\bot:=A\sqcap \neg A$, and $\top:=\neg\bot$. That is, $\top$ stands for 
a (DL) tautology, and $\bot$ for a contradiction.
The interpretation \Imc \emph{satisfies} the GCI $C\sqsubseteq D$ iff $C^\Imc\subseteq D^\Imc$. It is a 
\emph{model} of the TBox \Tmc iff it satisfies all the GCIs in \Tmc.

One of the main reasoning problems in DLs is to decide \emph{subsumption} between two concepts $C,D$ 
w.r.t.\ a TBox \Tmc; that is, to verify that every model of the TBox \Tmc also satisfies the GCI $C\sqsubseteq D$.
If this is the case, we denote it as $\Tmc\models C\sqsubseteq D$. 
It is easy to see that the relation $\models$ defines a c-property over the class \Amf of axioms containing all
possible GCIs; in this case, an ontology is a TBox.

%
%


The following example instantiates the basic ideas presented in this section.

\begin{example}
\label{exa:cons}
Consider for example the \ALC TBox \Texa containing the axioms
\begin{align*}
A\sqsubseteq\exists r.A: {} & \axm1, &
\exists r.A\sqsubseteq B: {} & \axm2, \\ 
A\sqsubseteq\forall r.B: {} & \axm3, & 
A\sqcap B\sqsubseteq\bot: {} & \axm4,
\end{align*}
where $\axm{i}, 1\le i\le 4$ are the propositional variables labelling the axiom.
It is easy to see that $\Texa\models A\sqsubseteq\bot$, and there are two justifications for this fact; namely, the
TBoxes $\{\axm1,\axm2,\axm4\}$ and $\{\axm1,\axm3,\axm4\}$. From this, it follows that 
$\axm1\land\axm4\land(\axm2\lor\axm3)$ is a pinpointing formula for $A\sqsubseteq\bot$ w.r.t.\ 
\Texa.
\end{example}

\section{Consequence-based Algorithms}\label{sec:consequence-based-alg}

Abstracting from particularities, a \emph{consequence-based algorithm} works on a set
$\Amc$ of \emph{consequences}, which is expanded through rule applications. Algorithms of this kind have
two phases. The \emph{normalization} phase first transforms all the axioms in an ontology into a suitable normal 
form. 
The 
\emph{saturation} phase initializes the set $\Amc$ of \emph{derived consequences} with the normalized ontology 
and applies the rules to expand it. The set \Amc is often called a \emph{state}.
As mentioned, the initial
state $\Amc_0$ contains the normalization of the input ontology \Omc. A \emph{rule} is of 
the form $\Bmc_0\to \Bmc_1$, where $\Bmc_0,\Bmc_1$ are finite sets of consequences.
This rule is \emph{applicable} to the state \Amc if 
$\Bmc_0\subseteq \Amc$ and $\Bmc_1\not\subseteq\Amc$. Its \emph{application} 
extends $\Amc$ to $\Amc\cup\Bmc_1$. \Amc is \emph{saturated} if no rule is applicable to it. The method 
\emph{terminates} if \Amc is saturated after finitely many rule applications, independently of the rule
application order chosen. 
For the rest of this section and most of the following, we assume that the input ontology is already in this normal 
form, and focus only on the second phase.

Given a rule $R=\Bmc_0\to \Bmc_1$, we use $\pre{R}$ and $\res{R}$ 
to denote the sets $\Bmc_0$ of premises that trigger $R$ and $\Bmc_1$ of consequences
resulting of its applicability, respectively.
If the state $\Amc'$ is obtained from \Amc through the application of the rule $R$, we write 
$\Amc\to_R \Amc'$, and denote $\Amc\to\Amc'$ if the precise rule used is not relevant.

Consequence-based algorithms derive, in a single execution, several axioms that are entailed from the input
ontology. Obviously, in general they cannot generate \emph{all} possible entailed axioms, as such a set may be
infinite (e.g., in the case of \ALC). Thus, to define correctness, we need to specify for every ontology \Omc, a 
finite set $\delta(\Omc)$ of \emph{derivable consequences} of \Omc.

\begin{definition}[Correctness]
A consequence-based algorithm is \emph{correct} for the consequence property \Pmc if for every ontology
\Omc, the following two conditions hold: (i) it terminates, and 
(ii) if $\Omc\to^*\Amc$ and \Amc is saturated, then for every derivable consequence $\alpha\in\delta(\Omc)$ it 
follows that $(\Omc,\alpha)\in\Pmc$ iff $\alpha\in\Amc$.
\end{definition}
That is, the algorithm is correct for a property if it terminates and is sound and complete w.r.t.\ the finite set
of derivable consequences $\delta(\Omc)$.

Notice that the definition of correctness requires that the resulting set of consequences obtained from the 
application of the rules is always the same, independently of the order in which the rules are applied. In other 
words, if $\Omc\to^*\Amc$, $\Omc\to^*\Amc'$, and $\Amc,\Amc'$ are both saturated, then $\Amc=\Amc'$. This
is a fundamental property that will be helpful for showing correctness of the pinpointing extensions in the next
section.

A well-known example of a consequence-based algorithm is the \ALC reasoning method from~\cite{SiKH-IJCAI11}. 
To 
describe this algorithm we need some notation. A \emph{literal} is either a concept name or a negated concept 
name. Let $H,K$ denote (possibly empty) conjunctions of literals, and $M,N$ are (possibly empty) 
disjunctions of concept names. For simplicity, we treat these conjunctions and disjunctions as sets.
The normalization phase transforms all GCIs to be of the form: 
\[
\bigsqcap_{i=1}^n A_i\sqsubseteq \bigsqcup_{j=1}^mB_j, \qquad 
A \sqsubseteq \exists r.B, \qquad
A \sqsubseteq \forall r.B, \qquad
\exists r.A \sqsubseteq B.
\]
For a given \ALC TBox \Tmc, the set $\delta(\Tmc)$ of derivable consequences contains all GCIs 
of the form $H\sqsubseteq M$ and $H\sqsubseteq N\sqcup \exists r.K$.
The saturation phase initializes $\Amc$ to contain the axioms in the (normalized) TBox, and applies the rules from
Table~\ref{tab:rules} until a saturated state is found.
\begin{table}[tb]
\caption{\ALC consequence-based algorithm rules $\Bmc_0\to \Bmc_1$.}
\label{tab:rules}
\centering
\begin{tabular}{@{}ll@{\quad }l@{\, }|@{\, }l@{}}
\toprule
&$\Bmc_0$ &  & $\Bmc_1$ \\
\midrule
1: & $\emptyset$ && $H\sqcap A\sqsubseteq A$\\
2: & $H\sqcap \neg A \sqsubseteq N\sqcup A$ 	& 	& $H\sqcap\neg A\sqsubseteq N$ \\
3: & $H\sqsubseteq N_1\sqcup A_1,\ldots,H\sqsubseteq N_n\sqcup A_n$ & $A_1\sqcap\cdots \sqcap A_n\sqsubseteq N$ & $H\sqsubseteq \bigsqcup_{i=1}^n N_i\sqcup N$ \\
4: & $H\sqsubseteq N\sqcup A$ & $A\sqsubseteq\exists r.B$ & $H\sqsubseteq N\sqcup\exists r.B$ \\
5: & $H\sqsubseteq M\sqcup\exists r.K, K\sqsubseteq N\sqcup A$ & $\exists r.A\sqsubseteq B$ & $H\sqsubseteq M\sqcup B\sqcup\exists r.(K\sqcap \neg A)$ \\
6: & $H\sqsubseteq M\sqcup\exists r.K, K\sqsubseteq\bot$ &  & $H\sqsubseteq M$ \\
7: & $H\sqsubseteq M\sqcup\exists r.K, H\sqsubseteq N\sqcup A$ & $A\sqsubseteq \forall r.B$ & $H\sqsubseteq M\sqcup N\sqcup\exists r.(K\sqcap B)$ \\
\bottomrule
\end{tabular}
\end{table}
After termination, one can check that for every derivable consequence $C\sqsubseteq D$ it holds that
$\Tmc\models C\sqsubseteq D$ iff $C\sqsubseteq D\in \Amc$; that is, this algorithm is correct for the
property~\cite{SiKH-IJCAI11}.

\begin{example}
\label{exa:cba}
Recall the \ALC TBox \Texa from Example~\ref{exa:cons}. Notice that all axioms in this TBox are already
in normal form; hence the normalization step does not modify it. The consequence-based algorithm starts with 
$\Amc:=\Texa$ and applies the rules until saturation.
One possible execution of the algorithm is 
%
\begin{align*}
\Amc_0 \  \to_5 \ A\sqsubseteq B \  \to_3 \  A \sqsubseteq \bot 
	     \ \to_7 \ A\sqsubseteq \exists r.(A\sqcap B) \ \to^* \ldots ,
\end{align*}
%
where $\Amc_0$ contains \Texa and the result of adding all the tautologies generated by the application of
Rule $1$ over it (see Figure~\ref{fig:cba}).
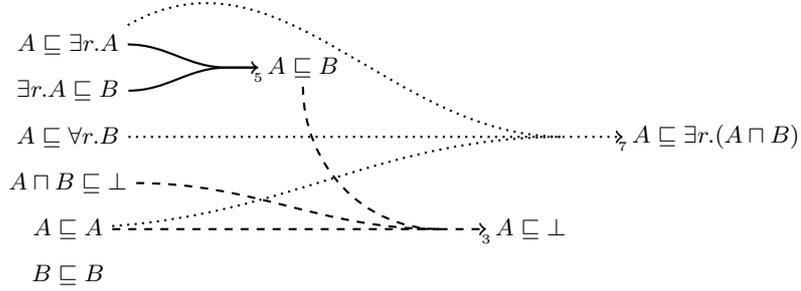
\begin{figure}[tb]
\centering
\begin{tikzpicture}[thick]
\node (a1) {$A\sqsubseteq\exists r.A$};
\node[below=1mm of a1] (a2) {$\exists r.A\sqsubseteq B$};
\node[below=1mm of a2] (a3) {$A\sqsubseteq\forall r.B$};
\node[below=1mm of a3] (a4) {$A\sqcap B\sqsubseteq \bot$};
\node[below=1mm of a4] (c1) {$A\sqsubseteq A$};
\node[below=1mm of c1] (c2) {$B\sqsubseteq B$};
\node[anchor=west] (c3) at ($(a1)!0.5!(a2)+(2.5,0)$) {$A\sqsubseteq B$};
\draw (a1) to[out=0,in=180] ($(c3)+(-1,0)$) -- (c3);
\draw (a2) to[out=0,in=180] ($(c3)+(-1,0)$) edge[->] node[below=-0.5mm,at end] {\tiny 5} (c3);
\node[anchor=west] (c4) at ($(a4)!1!(c1)+(5.5,0)$) {$A\sqsubseteq \bot$};
\draw[dashed] (a4) to[out=0,in=180] ($(c4)+(-1.2,0)$); 
\draw[dashed] (c3) to[out=-90,in=180] ($(c4)+(-1.2,0)$); 
\draw[dashed] (c1) to[out=0,in=180] ($(c4)+(-1.2,0)$) edge[->] node[below=-0.5mm,at end] {\tiny 3} (c4);
\node[right=6.5cm of a3,anchor=west] (c5) {$A\sqsubseteq \exists r.(A\sqcap B)$};
\draw[dotted] (a1.north east) to[out=30,in=180] ($(c5)+(-2,0)$); 
\draw[dotted,->] (a3) to node[below=-0.5mm,at end] {\tiny 7} (c5);
\draw[dotted] (c1) to[out=4,in=180] ($(c5)+(-2,0)$); 
\end{tikzpicture}
\caption{An execution of the \ALC consequence-based algorithm over \Texa from Example~\ref{exa:cba}.
Arrows point from the premises to the consequences generated by the application of the rule denoted in the
subindex.}
\label{fig:cba}
\end{figure}
Since rule applications only extend the set of consequences, we depict exclusively the newly added
consequence; e.g., the first rule application $\Amc_0\to_1 A\sqsubseteq A$ is in fact representing
$\Amc_0\to_1\Amc_0\cup\{A\sqsubseteq A\}$.
When the execution of the method terminates, the set of consequences \Amc contains $A\sqsubseteq\bot$;
hence we can conclude that this subsumption follows from \Texa. Notice that other consequences (e.g.,
$A\sqsubseteq B$) are also derived from the same execution.
\end{example}

For the rest of this paper, we consider an
arbitrary, but fixed, consequence-based algorithm, that is correct for a given c-property \Pmc.

%

\section{The Pinpointing Extension}

Our goal is to extend consequence-based algorithms from the previous section to methods that compute 
pinpointing 
formulae for their consequences. We achieve this by modifying the notion of states, and the rule applications
on them.
Recall that every axiom in $\alpha$ in the class \Amf (in hence, also every axiom in the ontology \Omc) is labelled 
with a unique propositional variable 
$\lab{\alpha}$. In a similar manner, we consider sets of consequences \Amc that are labelled with 
a monotone Boolean formula. 
We use the notation $\Amc^{pin}$ to indicate that the elements in a the set \Amc are labelled in this way,
and use $\alpha:\varphi_\alpha\in\Amc^{pin}$ to express that 
the consequence $\alpha$, labelled with the formula $\varphi_\alpha$, belongs to $\Amc^{pin}$.
A \emph{pinpointing state} is a set of labelled consequences. 
We assume that each consequence in this set is labelled with only one formula. 
For a set of labelled consequences $\Amc^{pin}$ and a set of (unlabelled) consequences $X$, we define
$\fm{X}{\Amc^{pin}}:=\bigwedge_{\alpha\in X}\varphi_\alpha$, where $\varphi_\alpha=\bot$ if 
$\alpha\notin\Amc$. 

A consequence-based algorithm $\algorithm$ induces a pinpointing consequence-based algorithm 
 $\algorithm^{pin}$ by modifying the notion of rule application, and dealing with pinpointing states, instead of 
 classical states, through a modification of the formulae labelling the derived consequences.
 
\begin{definition}[Pinpointing Application]
The rule $R=\Bmc_0\to \Bmc_1$ is \emph{pinpointing applicable} to the pinpointing state $\Amc^{pin}$ if 
$\fm{\Bmc_0}{\Amc^{pin}}\not\models\fm{\Bmc_1}{\Amc^{pin}}$.
The \emph{pinpointing application} of this rule
modifies $\Amc^{pin}$ to: 
\begin{align*}
\{\alpha:\varphi_\alpha\lor\fm{\Bmc_0}{\Amc^{pin}}\mid \alpha\in\Bmc_1,\alpha:\varphi_\alpha\in\Amc^{pin}\}
	\cup(\Amc^{pin}\setminus \{\alpha:\varphi_\alpha\mid \alpha\in \Bmc_1\}).
\end{align*}
The pinpointing state $\Amc^{pin}$ is \emph{pinpointing saturated} if no rule is pinpointing applicable to it.
\end{definition}
We denote as $\Amc^{pin}\wto_R\Bmc^{pin}$ the fact that $\Bmc^{pin}$ is obtained from the pinpointing
application of the rule $R$ to $\Amc^{pin}$. As before, we drop the subscript $R$ if the name of the rule is 
irrelevant and write simply $\Amc^{pin}\wto\Bmc^{pin}$. The pinpointing extension starts, as the classical one, with 
the set of all normalized axioms. For the rest of this section, we assume that the input ontology is already 
normalized, and hence each axiom in the initial pinpointing state is labelled with its corresponding propositional 
variable. In the next section we show how to deal with normalization.

\begin{example}
\label{exa:ppa}
Consider again the TBox \Texa from Example~\ref{exa:cons}. At the beginning of the execution of the pinpointing
algorithm, the set of consequences is the TBox, with each axiom labelled by the unique propositional variable
representing it; that is 
$\Tmc^{pin}=\{
 A\sqsubseteq\exists r.A:\axm1,
 \exists r.A\sqsubseteq B:\axm2,
 A\sqsubseteq\forall r.B:\axm3,
 A\sqcap B\sqsubseteq\bot:\axm4
\}$.
A pinpointing application of Rule 1 adds the new consequence $A\sqsubseteq A:\top$, where the tautology $\top$
labelling this consequence arises from the fact that rule 1 has no premises. At this point, one can pinpointing apply
Rule 5 with 
\[
\Bmc_0=\{A\sqsubseteq \exists r.A,\ A\sqsubseteq A,\ \exists r.A \sqsubseteq B\}, \qquad
\Bmc_1=\{A\sqsubseteq B\}
\] 
(see the solid arrow in Figure~\ref{fig:cba}). In this case,
$\fm{\Bmc_0}{\Amc^{pin}}=\axm1\land\top\land\axm2$ and $\fm{\Bmc_1}{\Amc^{pin}}=\bot$ because the 
consequence $A\sqsubseteq B$ does not belong to $\Amc^{pin}$ yet. Hence 
$\fm{\Bmc_0}{\Amc^{pin}}\not\models\fm{\Bmc_1}{\Amc^{pin}}$, and the rule is indeed pinpointing applicable. 
The pinpointing application of this rule adds the new labelled consequence
$A\sqsubseteq B:\axm1\land\axm2$ to $\Amc^{pin}$. Then, Rule 3 becomes pinpointing applicable with
\[
\Bmc_0=\{A\sqsubseteq B,\ A\sqsubseteq A,\ A\sqcap B\sqsubseteq \bot\},
\]
which adds $A\sqsubseteq \bot:\axm1\land\axm2\land \axm4$ to the set of consequences. Then, Rule 7
over the set of premises
\[
\Bmc_0=\{A\sqsubseteq \exists r.A,\ A\sqsubseteq A,\ A \sqsubseteq \forall r.B\},
\] 
yields the new consequence $A\sqsubseteq \exists r.(A\sqcap B):\axm1\land\axm3$. 

Notice that, at this point Rule 6 is not applicable in the classical case over the set of premises 
$\Bmc_0=\{A\sqsubseteq\exists r.(A\sqcap B),A\sqcap B\sqsubseteq\bot\}$ because its (regular) application would
add the consequence $A\sqsubseteq\bot$ that was already derived. However, 
\[
\fm{\Bmc_0}{\Amc^{pin}}=\axm1\land\axm3\land\axm4 \not\models \axm1\land\axm2\land\axm4=\varphi_{A\sqsubseteq\bot}=\fm{\Bmc_1}{\Amc^{pin}};
\]
hence, the rule is in fact pinpointing applicable. The pinpointing application of this Rule 6 substitutes the labelled
consequence $A\sqsubseteq\bot:\axm1\land\axm2\land\axm4$  with the consequence
$A\sqsubseteq\bot:(\axm1\land\axm2\land\axm4)\lor(\axm1\land\axm3\land\axm4)$. The pinpointing extension
will then continue applying rules until a saturated state is reached. This execution is summarized in
Figure~\ref{fig:ppa}.
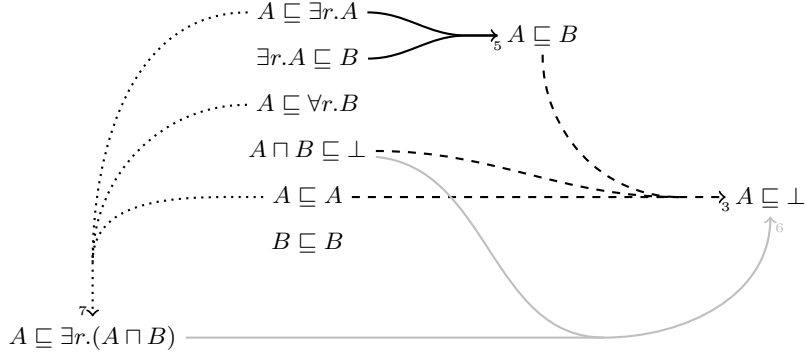
\begin{figure}[tb]
\centering
 \begin{tikzpicture}[thick]
\node (a1) {$A\sqsubseteq\exists r.A$};
\node[below=1mm of a1] (a2) {$\exists r.A\sqsubseteq B$};
\node[below=1mm of a2] (a3) {$A\sqsubseteq\forall r.B$};
\node[below=1mm of a3] (a4) {$A\sqcap B\sqsubseteq \bot$};
\node[below=1mm of a4] (c1) {$A\sqsubseteq A$};
\node[below=1mm of c1] (c2) {$B\sqsubseteq B$};
\node[anchor=west] (c3) at ($(a1)!0.5!(a2)+(2.5,0)$) {$A\sqsubseteq B$};
\draw (a1) to[out=0,in=180] ($(c3)+(-1,0)$) -- (c3);
\draw (a2) to[out=0,in=180] ($(c3)+(-1,0)$) edge[->] node[below=-0.5mm,at end] {\tiny 5} (c3);
\node[anchor=west] (c4) at ($(a4)!1!(c1)+(5.5,0)$) {$A\sqsubseteq \bot$};
\draw[dashed] (a4) to[out=0,in=180] ($(c4)+(-1.2,0)$); 
\draw[dashed] (c3) to[out=-90,in=180] ($(c4)+(-1.2,0)$); 
\draw[dashed] (c1) to[out=0,in=180] ($(c4)+(-1.2,0)$) edge[->] node[below=-0.5mm,at end] {\tiny 3} (c4);
\node[below left=of c2,anchor=east] (c5) {$A\sqsubseteq \exists r.(A\sqcap B)$};
\draw[dotted] (a1) to[out=180,in=90] ($(c5)+(0,1)$); 
\draw[dotted] (a3) to[out=180,in=90] ($(c5)+(0,1)$) edge[->] node[left=-0.5mm,pos=0.9] {\tiny 7} (c5);
\draw[dotted] (c1) to[out=180,in=90] ($(c5)+(0,1)$); 
\node[right=5.5cm of c5] (p) {};
\draw[lightgray] (a4) to[out=-5,in=180] (p);
\draw[lightgray] (c5) to (p.west) edge[->,out=0,in=-90] node[right=-0.5mm,pos=0.95] {\tiny 6} (c4);
\end{tikzpicture}
\caption{Pinpointing application of rules over \Texa in Example~\ref{exa:ppa}.
Arrows point from the premises to the consequences generated by the pinpointing application of the rule denoted in the
subindex.}
\label{fig:ppa}
\end{figure}
At that point, the set of labelled consequences will contain, among others, 
$A\sqsubseteq\bot:(\axm1\land\axm2\land\axm4)\lor(\axm1\land\axm3\land\axm4)$. The label of this 
consequence corresponds to the pinpointing formula that was computed in Example~\ref{exa:cons}.
\end{example}

Notice that if a rule $R$ is applicable to some state $\Amc$, then it is also pinpointing applicable to it. 
This holds because the regular applicability condition requires that at least one consequence $\alpha$ in 
$\res{R}$ should not exist already in the state \Amc, which is equivalent to having the consequence 
$\alpha:\bot\in\Amc^{pin}$. Indeed, we used this fact in the first pinpointing rule applications of 
Example~\ref{exa:ppa}.
If the consequence-based
algorithm is correct, then it follows by definition that for any saturated state $\Amc$ obtained by a sequence of
rule applications from \Omc, $\Omc\models\alpha$ iff $\alpha\in\Amc$. Conversely, as shown next,
every consequence created by a pinpointing rule application is also generated by a regular rule application.
First, we extend the notion of a \valuation-projection to sets of consequences (i.e., states) in the obvious manner: 
$\Amc_\valuation:=\{\alpha\mid \alpha:\varphi_\alpha\in\Amc^{pin},\valuation\models\varphi\}$.
\begin{lemma} 
\label{lem:valuation}
Let $\Amc^{pin},\Bmc^{pin}$ be pinpointing states and let $\valuation$ be a valuation. 
If $\Amc^{pin}\wto^*\Bmc^{pin}$ 
then $\Amc_\valuation\to^*\Bmc_\valuation$. 
\end{lemma}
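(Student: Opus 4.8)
The plan is to prove the statement by induction on the length of the pinpointing derivation $\Amc^{pin}\wto^*\Bmc^{pin}$. The base case (derivation of length $0$) is trivial since then $\Bmc^{pin}=\Amc^{pin}$ and therefore $\Bmc_\valuation=\Amc_\valuation$, so $\Amc_\valuation\to^*\Bmc_\valuation$ holds vacuously. For the inductive step, it suffices to consider a single pinpointing rule application $\Amc^{pin}\wto_R\Cmc^{pin}$, show that $\Amc_\valuation\to^*\Cmc_\valuation$, and then chain this with the induction hypothesis applied to the remaining $\Cmc^{pin}\wto^*\Bmc^{pin}$. So the real content is the one-step claim: if $\Amc^{pin}\wto_R\Cmc^{pin}$ with $R=\Bmc_0\to\Bmc_1$, then either $\Cmc_\valuation=\Amc_\valuation$, or $R$ is (classically) applicable to $\Amc_\valuation$ and its application yields $\Cmc_\valuation$.

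First I would fix a valuation $\valuation$ and do a case split on whether $\valuation$ satisfies $\fm{\Bmc_0}{\Amc^{pin}}$. If $\valuation\not\models\fm{\Bmc_0}{\Amc^{pin}}=\bigwedge_{\alpha\in\Bmc_0}\varphi_\alpha$, then there is some $\alpha\in\Bmc_0$ with $\valuation\not\models\varphi_\alpha$, hence $\alpha\notin\Amc_\valuation$, so $\Bmc_0\not\subseteq\Amc_\valuation$ and $R$ is not classically applicable at $\Amc_\valuation$. I then need to check that $\Cmc_\valuation=\Amc_\valuation$ anyway: the pinpointing application only changes the labels of consequences $\alpha\in\Bmc_1$, replacing $\varphi_\alpha$ by $\varphi_\alpha\lor\fm{\Bmc_0}{\Amc^{pin}}$ (and adding $\alpha$ with label $\bot\lor\fm{\Bmc_0}{\Amc^{pin}}$ if it was absent, interpreting the missing label as $\bot$). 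Since $\valuation\not\models\fm{\Bmc_0}{\Amc^{pin}}$, we have $\valuation\models\varphi_\alpha\lor\fm{\Bmc_0}{\Amc^{pin}}$ iff $\valuation\models\varphi_\alpha$, so membership in the $\valuation$-projection is unchanged for every such $\alpha$; all other consequences are untouched. Hence $\Cmc_\valuation=\Amc_\valuation$ and $\Amc_\valuation\to^*\Cmc_\valuation$ trivially (zero steps).

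In the remaining case $\valuation\models\fm{\Bmc_0}{\Amc^{pin}}$, so every $\alpha\in\Bmc_0$ has $\valuation\models\varphi_\alpha$, giving $\Bmc_0\subseteq\Amc_\valuation$. I would then distinguish whether $\Bmc_1\subseteq\Amc_\valuation$. If not, $R$ is classically applicable to $\Amc_\valuation$; its application produces $\Amc_\valuation\cup\Bmc_1$, and I must verify this equals $\Cmc_\valuation$. For $\alpha\in\Bmc_1$, the new label is $\varphi_\alpha\lor\fm{\Bmc_0}{\Amc^{pin}}$, which $\valuation$ satisfies because $\valuation\models\fm{\Bmc_0}{\Amc^{pin}}$, so every $\alpha\in\Bmc_1$ lands in $\Cmc_\valuation$; consequences outside $\Bmc_1$ keep their labels and hence their projection status. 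Thus $\Cmc_\valuation=\Amc_\valuation\cup\Bmc_1$ and $\Amc_\valuation\to_R\Cmc_\valuation$ is a single classical step. If instead $\Bmc_1\subseteq\Amc_\valuation$ already, then $R$ is not classically applicable, but exactly as in the previous paragraph $\Cmc_\valuation=\Amc_\valuation$ (the labels of $\alpha\in\Bmc_1$ only grow, and these $\alpha$ were already in $\Amc_\valuation$, so the projection is unchanged), so again $\Amc_\valuation\to^*\Cmc_\valuation$ with zero steps. Combining the one-step claim with the induction hypothesis gives $\Amc_\valuation\to^*\Cmc_\valuation\to^*\Bmc_\valuation$, completing the induction.

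The main obstacle is purely bookkeeping: one must be careful with the convention that an absent consequence carries the label $\bot$ (so that $\fm{\cdot}{\cdot}$ and the projection are well defined), and one must confirm that the pinpointing application never \emph{removes} a consequence from the $\valuation$-projection — it only monotonically weakens labels via disjunction with $\fm{\Bmc_0}{\Amc^{pin}}$, so projections can only grow, and grow by exactly $\Bmc_1$ precisely when $\valuation\models\fm{\Bmc_0}{\Amc^{pin}}$. Once the monotonicity of labels under $\wto$ is made explicit, the classical applicability condition $\Bmc_0\subseteq\Amc_\valuation$, $\Bmc_1\not\subseteq\Amc_\valuation$ matches up cleanly with the three cases above, and no genuinely hard step remains.
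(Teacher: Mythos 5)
Your proof is correct and follows essentially the same route as the paper's: reduce to a single pinpointing step, then case-split on whether $\valuation$ satisfies $\fm{\Bmc_0}{\Amc^{pin}}$ and on whether $\Bmc_1\subseteq\Amc_\valuation$, concluding in each case that the projection is unchanged or advances by exactly one classical application of $R$. Your version just spells out the label bookkeeping (absent consequences labelled $\bot$, monotone growth of labels under disjunction) more explicitly than the paper does.
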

\begin{proof}
We show that if $\Amc^{pin}\wto_R\Bmc^{pin}$ then $\Amc_\valuation\to_R\Bmc_\valuation$ or 
$\Amc_\valuation=\Bmc_\valuation$, where $R=\Bmc_0\to \Bmc_1$ is a rule. 
If \valuation does not satisfy $\fm{\Bmc_0}{\Amc^{pin}}$ then $\Amc_\valuation=\Bmc_\valuation$
since the labels of the newly added assertions are not satisfied by \valuation, and the 
disjunction with $\fm{\Bmc_0}{\Amc^{pin}}$ does not change the evaluation of 
the modified labels under \valuation. On the other hand,  if 
\valuation satisfies $\fm{\Bmc_0}{\Amc^{pin}}$ then 
$\Bmc_0\subseteq\Amc_\valuation$. If $\Bmc_1\not\subseteq\Amc_\valuation$ 
then  $\Amc_\valuation\to_R\Bmc_\valuation$. Otherwise, again we have 
$\Amc_\valuation=\Bmc_\valuation$. 
\end{proof}
Since all the labels are monotone Boolean formulae, it follows that the valuation $\valuation_\top=\lab{\Omc}$ that
makes every propositional variable true satisfies all labels, and hence for every pinpointing state
$\Amc^{pin}$, $\Amc_{\valuation_\top}=\Amc$. Lemma~\ref{lem:valuation} hence entails that the pinpointing 
extension of the consequence-based algorithm \algorithm does not create new consequences, but only labels 
these consequences.
%
Termination of the pinpointing extension then follows from the termination 
of the consequence-based algorithm and the condition for pinpointing rule application that entails that,
whenever a rule is pinpointing applied, the set of labelled consequences is necessarily modified either by
adding a new consequence, or by modifying the label of at least one existing consequence to a weaker (i.e.,
more general) monotone Boolean formula. Since there are only finitely many monotone Boolean formulas
over $\lab{\Omc}$, every label can be changed finitely many times only. 

It is in fact possible to get a better understanding of the running time of the pinpointing extension of a 
consequence-based algorithm. Suppose that, on input \Omc, the consequence-based algorithm \algorithm
stops after at most $f(\Omc)$ rule applications. Since every rule application must add at least one consequence
to the state, the saturated state reached by this algorithm will have at most $f(\Omc)$ consequences. Consider
now the pinpointing extension of \algorithm. We know, from the previous discussion, that this pinpointing 
extension generates the same set of consequences. Moreover, since there are $2^{|\Omc|}$ possible valuations
over $\lab{\Omc}$, and every pinpointing rule application that does not add a new consequence must 
generalize at least one formula, the labels of each consequence can be modified at most $2^{|\Omc|}$ times.
Overall, this means that the pinpointing extension of \algorithm stops after at most $2^{|\Omc|}f(\Omc)$ rule
applications. We now formalize this result.
\begin{theorem}\label{thm:termination}
If a consequence-based algorithm $\algorithm$ stops after at most $f(\Omc)$ rule applications, 
then $A^{pin}$ stops after at most $2^{|\Omc|}f(\Omc)$ rule
applications.
\end{theorem}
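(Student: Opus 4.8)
The plan is to turn the informal counting argument given in the paragraph preceding the theorem into a rigorous bound by tracking a suitable well-founded measure on pinpointing states. First I would fix an input ontology \Omc, write $n=|\Omc|$, and recall from Lemma~\ref{lem:valuation} (together with the observation that $\Amc_{\valuation_\top}=\Amc$) that for any sequence $\Amc_0^{pin}\wto\Amc_1^{pin}\wto\cdots$ starting from the initial pinpointing state, the underlying unlabelled consequences $(\Amc_i)_{\valuation_\top}$ form a $\to$-sequence of ordinary states. Since \algorithm is correct it terminates after at most $f(\Omc)$ rule applications on \Omc, so the saturated ordinary state has at most $f(\Omc)$ consequences; hence at no point can a pinpointing state contain more than $f(\Omc)$ distinct consequences, because adding a genuinely new consequence in the pinpointing run corresponds (again by Lemma~\ref{lem:valuation} with $\valuation=\valuation_\top$) to a genuine ordinary rule application.

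Next I would argue that each pinpointing rule application $\Amc^{pin}\wto_R\Bmc^{pin}$ is of exactly one of two types: (a) it adds a consequence $\alpha\in\res{R}$ that was not present in $\Amc^{pin}$ (equivalently, had label $\bot$), or (b) it leaves the set of consequences unchanged but replaces the label $\varphi_\alpha$ of at least one $\alpha\in\res{R}$ by the strictly weaker formula $\varphi_\alpha\lor\fm{\Bmc_0}{\Amc^{pin}}$. That at least one label strictly weakens in case (b) is precisely what the pinpointing applicability condition $\fm{\Bmc_0}{\Amc^{pin}}\not\models\fm{\Bmc_1}{\Amc^{pin}}$ guarantees: if every $\varphi_\alpha$ for $\alpha\in\res{R}$ already entailed $\fm{\Bmc_0}{\Amc^{pin}}$ then $\fm{\Bmc_1}{\Amc^{pin}}$ would be entailed by $\fm{\Bmc_0}{\Amc^{pin}}$ and the rule would not be pinpointing applicable. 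Moreover ``strictly weaker, up to logical equivalence over $\lab\Omc$'' is the right notion, since there are only $2^{2^{n}}$ (or, more crudely, finitely many) equivalence classes of monotone Boolean formulae over the $n$ variables in $\lab\Omc$; but in fact we only need that the \emph{set of valuations} $\valuation\subseteq\lab\Omc$ satisfying a label can only grow, and it takes values in a set of size at most $2^{n}$, so a single label can strictly weaken at most $2^{n}$ times over the whole run.

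Then I would combine these two facts into the bound. Define, for a pinpointing state $\Amc^{pin}$ appearing in the run, the quantity $\mu(\Amc^{pin})$ to be the sum over all $\alpha:\varphi_\alpha\in\Amc^{pin}$ of the number of valuations $\valuation\subseteq\lab\Omc$ satisfying $\varphi_\alpha$. Every type-(b) step strictly increases $\mu$ by at least one. Every type-(a) step adds one new consequence; since the number of consequences is bounded by $f(\Omc)$ throughout, there are at most $f(\Omc)$ type-(a) steps. Between (or after) consecutive type-(a) steps the value $\mu$ can increase through type-(b) steps only, and $\mu$ is bounded above by $f(\Omc)\cdot 2^{n}$ (at most $f(\Omc)$ consequences, each contributing at most $2^{n}$ satisfying valuations); hence the total number of type-(b) steps is at most $f(\Omc)\cdot 2^{n}$. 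Adding the two contributions gives at most $f(\Omc)+2^{n}f(\Omc)\le 2^{n}\cdot f(\Omc) + f(\Omc)$ rule applications, which is $\le 2^{|\Omc|}f(\Omc)$ after absorbing the lower-order term into the exponential (or one simply states the bound as $(2^{|\Omc|}+1)f(\Omc)=O(2^{|\Omc|}f(\Omc))$, matching the claimed asymptotic statement).

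The only real subtlety — and the step I would be most careful about — is making precise that the number of consequences occurring in \emph{any} pinpointing state along the run is bounded by $f(\Omc)$, rather than just in the final saturated state. This needs Lemma~\ref{lem:valuation} applied with $\valuation=\valuation_\top$: the projection of the pinpointing run onto $\valuation_\top$ is a legal ordinary run, and in a legal ordinary run the state size never exceeds $f(\Omc)$ because \algorithm terminates within $f(\Omc)$ steps regardless of rule-application order and each ordinary step adds at least one consequence; since $\Amc_{\valuation_\top}=\Amc$ for every pinpointing state, the bound transfers. Everything else is the elementary counting above, so I would keep the write-up short, stating the two-type case distinction, the per-label bound of $2^{|\Omc|}$ weakenings, and the product bound, and cite Theorem~\ref{thm:termination}'s preceding paragraph rather than re-deriving termination.
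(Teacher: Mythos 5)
Your proposal is correct and follows essentially the same argument the paper gives (in the paragraph immediately preceding the theorem, which serves as its proof): at most $f(\Omc)$ consequence-adding steps, and each remaining step must strictly enlarge the set of valuations satisfying some label, of which there are at most $2^{|\Omc|}$ per consequence. Your write-up merely makes the paper's counting explicit via the measure $\mu$ and the appeal to Lemma~\ref{lem:valuation} with $\valuation_\top$, including the same harmless slack in absorbing the additive $f(\Omc)$ term.
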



Another important property of the pinpointing extension is that saturatedness of a state is preserved under
projections.
\begin{lemma} 
\label{lem:pin-sat-adapted}
Let  $\Amc^{pin}$ be a pinpointing state and \valuation a valuation. 
If $\Amc^{pin}$ is pinpointing saturated then $\Amc_\valuation$ is saturated. 
\end{lemma}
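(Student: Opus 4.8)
The plan is to prove the contrapositive: if $\Amc_\valuation$ is not saturated, then $\Amc^{pin}$ is not pinpointing saturated. So I would start by assuming that some rule $R=\Bmc_0\to\Bmc_1$ is (classically) applicable to $\Amc_\valuation$, that is, $\Bmc_0\subseteq\Amc_\valuation$ and $\Bmc_1\not\subseteq\Amc_\valuation$, and then show that the very same rule $R$ is pinpointing applicable to $\Amc^{pin}$, i.e.\ $\fm{\Bmc_0}{\Amc^{pin}}\not\models\fm{\Bmc_1}{\Amc^{pin}}$. The witness falsifying this entailment will be $\valuation$ itself.

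First I would exploit $\Bmc_0\subseteq\Amc_\valuation$. By the definition of the projection $\Amc_\valuation=\{\premise\mid \premise:\varphi_\premise\in\Amc^{pin},\ \valuation\models\varphi_\premise\}$, every $\premise\in\Bmc_0$ occurs in $\Amc^{pin}$ with a label $\varphi_\premise$ that is satisfied by $\valuation$. Conjoining over all premises gives $\valuation\models\fm{\Bmc_0}{\Amc^{pin}}$.

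Next I would use $\Bmc_1\not\subseteq\Amc_\valuation$: fix some $\side\in\Bmc_1$ with $\side\notin\Amc_\valuation$. Here I would recall that $\fm{\cdot}{\cdot}$ treats a consequence absent from the state as labelled by $\bot$, and perform a small case split. If $\side\notin\Amc$, the conjunct of $\fm{\Bmc_1}{\Amc^{pin}}$ corresponding to $\side$ is $\bot$, which $\valuation$ does not satisfy. If instead $\side:\varphi_\side\in\Amc^{pin}$, then $\side\notin\Amc_\valuation$ says precisely that $\valuation\not\models\varphi_\side$. In both cases $\valuation$ falsifies a conjunct of $\fm{\Bmc_1}{\Amc^{pin}}$, hence $\valuation\not\models\fm{\Bmc_1}{\Amc^{pin}}$.

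Combining the two observations, $\valuation$ satisfies $\fm{\Bmc_0}{\Amc^{pin}}$ but not $\fm{\Bmc_1}{\Amc^{pin}}$, so $\fm{\Bmc_0}{\Amc^{pin}}\not\models\fm{\Bmc_1}{\Amc^{pin}}$ and $R$ is pinpointing applicable to $\Amc^{pin}$, contradicting its pinpointing saturatedness. The argument is short and I do not expect a genuine obstacle; the only point that needs a little attention is exactly the case split above, where the $\bot$-padding convention built into $\fm{\cdot}{\cdot}$ must be handled so that both ``$\side$ is missing from the state'' and ``$\side$ is present but its label is not satisfied by $\valuation$'' are covered uniformly.
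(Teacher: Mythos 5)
Your proof is correct and follows essentially the same route as the paper's: show that classical applicability of $R$ to $\Amc_\valuation$ implies pinpointing applicability to $\Amc^{pin}$, using $\valuation$ as the witness that satisfies $\fm{\Bmc_0}{\Amc^{pin}}$ but not $\fm{\Bmc_1}{\Amc^{pin}}$, with the same case split on whether the missing consequence is absent from $\Amc$ or present with an unsatisfied label.
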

\begin{proof}
Suppose there is a rule $R$ such that $R$ is applicable to 
$\Amc_\valuation$. This means that $\Bmc_0\subseteq \Amc_\valuation$
and $\Bmc_1\not\subseteq \Amc_\valuation$. We show that $R$ is pinpointing 
applicable to $\Amc^{pin}$. Since  $\Bmc_0\subseteq \Amc_\valuation$, 
\valuation satisfies $\fm{\Bmc_0}{\Amc^{pin}}$.
As $\Bmc_1\not\subseteq \Amc_\valuation$, there is $\alpha\in\Bmc_1$ such that 
either $\consequence\not\in\Amc$ 
or  
$\consequence:\varphi_\consequence\in\Amc^{pin}$
but
\valuation does not satisfy $\varphi_\consequence$.
In the former case, $R$ is clearly pinpointing 
applicable to $\Amc^{pin}$. In the latter, 
$\fm{\Bmc_0}{\Amc^{pin}}\not\models \fm{\Bmc_1}{\Amc^{pin}}$
since   \valuation satisfies
$\fm{\Bmc_0}{\Amc^{pin}}$ but not $\fm{\Bmc_1}{\Amc^{pin}}$.  
\end{proof}

We can now show that the pinpointing extension of a consequence-based algorithm is indeed
a pinpointing algorithm; that is, that when a saturated pinpointing state $\Amc^{pin}$ is reached from rule 
applications starting from $\Omc^{pin}$, then for every $\alpha:\varphi_\alpha\in\Amc^{pin}$, $\varphi_\alpha$
is a pinpointing formula for $\alpha$ w.r.t.\ $\Omc^{pin}$.
\begin{theorem}[Correctness of Pinpointing]\label{thm:correct-pinpointing}
Let $\cproperty$ be a c-property on axiomatized inputs for $\inputs$ and $\axiomsset$.
Given a correct consequence-based algorithm $\algorithm$ for $\cproperty$,  for every 
axiomatized input $\Gamma=(\consequence,\axioms)\in\cproperty$, where \axioms is normalized, then
\begin{quote}
 if $\axioms^{pin}\wto^* \Amc^{pin}$, 
$\consequence:\pinpointing{\consequence}\in \Amc^{pin}$, and 
$\Amc^{pin}$ is pinpointing saturated,
then \pinpointing{\consequence} 
is a pinpointing formula for \cproperty and $\Gamma$. 
\end{quote}
\end{theorem}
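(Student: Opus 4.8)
The plan is to peel off the definition of a pinpointing formula and reduce it, one valuation at a time, to correctness of the underlying algorithm $\algorithm$ on the \emph{projected} ontology $\axioms_\valuation$. Concretely, fix an arbitrary valuation $\valuation\subseteq\lab{\axioms}$; we must show that $(\axioms_\valuation,\consequence)\in\cproperty$ iff $\valuation$ satisfies $\pinpointing{\consequence}$. Since by hypothesis $\consequence:\pinpointing{\consequence}\in\Amc^{pin}$ and every consequence in a pinpointing state carries exactly one label, the definition of the state projection gives at once that $\consequence\in\Amc_\valuation$ iff $\valuation$ satisfies $\pinpointing{\consequence}$. So it suffices to prove $(\axioms_\valuation,\consequence)\in\cproperty$ iff $\consequence\in\Amc_\valuation$.

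The crux is to recognise $\Amc_\valuation$ as a saturated state reachable by the \emph{classical} algorithm from $\axioms_\valuation$. First, because $\axioms$ is already normalized, the initial pinpointing state $\axioms^{pin}$ labels each axiom $\side\in\axioms$ with its propositional variable $\lab\side$, and hence its projection is $(\axioms^{pin})_\valuation=\{\side\in\axioms\mid\lab\side\in\valuation\}=\axioms_\valuation$. Applying Lemma~\ref{lem:valuation} to $\axioms^{pin}\wto^*\Amc^{pin}$ then yields $\axioms_\valuation\to^*\Amc_\valuation$, while Lemma~\ref{lem:pin-sat-adapted} yields that $\Amc_\valuation$ is saturated, since $\Amc^{pin}$ is pinpointing saturated. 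Note also that $\axioms_\valuation\subseteq\axioms$ is still normalized.

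Now I would invoke correctness of $\algorithm$ applied to the ontology $\axioms_\valuation$: since $\Amc_\valuation$ is a saturated state obtained from $\axioms_\valuation$, for every derivable consequence $\side\in\delta(\axioms_\valuation)$ we have $(\axioms_\valuation,\side)\in\cproperty$ iff $\side\in\Amc_\valuation$. Taking $\side=\consequence$ gives exactly the equivalence required, and, together with the first paragraph, this shows that $\pinpointing{\consequence}$ is a pinpointing formula for $\cproperty$ and $\Gamma$. It also remains to note that $\pinpointing{\consequence}$ really is a monotone Boolean formula over $\lab\axioms$: by induction on the length of $\axioms^{pin}\wto^*\Amc^{pin}$, every label appearing in a reachable pinpointing state is obtained from variables in $\lab\axioms$ and the constant $\top$ using only conjunctions (introduced by $\fm{\cdot}{\cdot}$) and disjunctions (introduced by a pinpointing application), so no negation ever occurs.

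The whole argument is short once Lemmas~\ref{lem:valuation} and~\ref{lem:pin-sat-adapted} are available, so the one point that genuinely needs care — the \emph{main obstacle} — is the side condition $\consequence\in\delta(\axioms_\valuation)$ used when applying correctness of $\algorithm$ to the sub-ontology $\axioms_\valuation$: we need the consequence under scrutiny to be a derivable consequence not merely of $\axioms$ but of every projection $\axioms_\valuation$. This is exactly what the assumptions on the derivable-consequence operator $\delta$ in the axiomatized-input setting provide; in the \ALC instantiation it is immediate, since $\delta(\Tmc)$ does not depend on the TBox at all. A secondary point worth verifying is that the projections in the two lemmas are taken with respect to exactly the classical rule system for which $\algorithm$ was assumed correct, so that its correctness really does transfer to $\axioms_\valuation$ verbatim.
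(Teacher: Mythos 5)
Your proof is correct and follows essentially the same route as the paper: project the pinpointing derivation and the saturated pinpointing state via Lemmas~\ref{lem:valuation} and~\ref{lem:pin-sat-adapted}, then invoke correctness of $\algorithm$ on $\axioms_\valuation$ (the paper phrases the forward direction through an auxiliary saturated state $\Bmc$ and uniqueness of saturation, but this is only a cosmetic difference from your direct application of the correctness biconditional to $\Amc_\valuation$). Your explicit flagging of the side condition $\consequence\in\delta(\axioms_\valuation)$ is a point the paper's own proof passes over silently, and it is handled exactly as you say in the \ALC instantiation.
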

\begin{proof}
We want to show that  \pinpointing{\consequence} is a pinpointing formula for 
\cproperty and $(\consequence,\axioms)$. That is, for every valuation $\valuation\subseteq \lab{\axioms}$:
$(\consequence,\axioms_\valuation)\in \cproperty$ iff \valuation satisfies $\pinpointing{\consequence}$.  

Assume that $(\consequence,\Omc_\valuation)\in \cproperty$, i.e., $\Omc_\valuation\models\consequence$, and
let $\Amc_0=\Omc_\valuation$. Since \algorithm terminates on every input, there is a saturated state
$\Bmc$ such that $\Amc_0\to^*\Bmc$. Completeness of \algorithm then implies that $\alpha\in\Bmc$.
By assumption, $\Amc_0^{pin}\wto^*\Amc^{pin}$ and $\Amc^{pin}$ is pinpointing
saturated. By Lemma~\ref{lem:valuation} it follows that $\axioms_\valuation\wto^* \Amc_\valuation$, and
by Lemma~\ref{lem:pin-sat-adapted}, $\Amc_\valuation$ is saturated. Hence,
since \algorithm is correct, $\Amc_\valuation=\Bmc$. This implies that 
$\valuation\models\pinpointing{\consequence}$ because $\alpha\in\Amc_\valuation$.

Conversely, suppose that \valuation satisfies $\pinpointing{\consequence}$.  
By assumption, $\consequence : \pinpointing{\consequence} \in \Amc^{pin}$,
$\axioms^{pin}\wto^* \Amc^{pin}$, 
and $\Amc^{pin}$ is saturated. By Lemma~\ref{lem:valuation}, 
$\axioms_\valuation\to^* \Amc_\valuation$. 
Since \valuation satisfies $\pinpointing{\consequence}$, 
$\alpha\in \Amc_\valuation$. Then, by soundness of $A$, $\axioms_\valuation\models\alpha$. 
\end{proof}
As it was the case for classical consequence-based algorithms, their pinpointing extensions can apply the
rules in any desired order. The notion of correctness of consequence-based algorithms guarantees that a 
saturated state will always be found, and the
result will be the same, regardless of the order in which the rules are applied. We have previously seen that 
termination transfers also the pinpointing extensions. Theorem~\ref{thm:correct-pinpointing} also shows that 
the formula associated to the consequences derived is always equivalent.
\begin{corollary}
Let $\Amc^{pin},\Bmc^{pin}$ two pinpointing saturated states, \axioms an ontology, and \consequence a 
consequence such that $\alpha:\varphi_\alpha\in\Amc^{pin}$ and $\alpha:\psi_\alpha\in\Bmc^{pin}$. 
If $\axioms^{pin}\wto^*\Amc^{pin}$ and $\axioms^{pin}\wto^*\Bmc^{pin}$, then 
$\varphi_\alpha\equiv\psi_\alpha$.
\end{corollary}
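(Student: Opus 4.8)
The plan is to derive this corollary directly from Theorem~\ref{thm:correct-pinpointing} together with the semantic characterization of pinpointing formulae. The key observation is that a pinpointing formula for a fixed consequence w.r.t.\ a fixed ontology is unique up to logical equivalence, since it is completely determined by the set of valuations satisfying it, which in turn is fixed by the c-property.

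First I would invoke Theorem~\ref{thm:correct-pinpointing} twice. Since $\axioms^{pin}\wto^*\Amc^{pin}$, $\alpha:\varphi_\alpha\in\Amc^{pin}$, and $\Amc^{pin}$ is pinpointing saturated, the theorem tells us that $\varphi_\alpha$ is a pinpointing formula for $\cproperty$ and $(\alpha,\axioms)$; that is, for every valuation $\valuation\subseteq\lab{\axioms}$ we have $(\alpha,\axioms_\valuation)\in\cproperty$ iff $\valuation$ satisfies $\varphi_\alpha$. The same reasoning applied to $\Bmc^{pin}$ gives that $\psi_\alpha$ is also a pinpointing formula for the same $\cproperty$ and $(\alpha,\axioms)$, so $(\alpha,\axioms_\valuation)\in\cproperty$ iff $\valuation$ satisfies $\psi_\alpha$, for every $\valuation\subseteq\lab{\axioms}$.

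Combining the two biconditionals, for every valuation $\valuation\subseteq\lab{\axioms}$, $\valuation$ satisfies $\varphi_\alpha$ iff $\valuation$ satisfies $\psi_\alpha$. Since both $\varphi_\alpha$ and $\psi_\alpha$ are monotone Boolean formulae over $\lab{\axioms}$, and any two Boolean formulae over the same variable set that are satisfied by exactly the same valuations are logically equivalent, we conclude $\varphi_\alpha\equiv\psi_\alpha$.

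There is one small gap to address: Theorem~\ref{thm:correct-pinpointing} as stated assumes $(\consequence,\axioms)\in\cproperty$, i.e., that $\alpha$ is actually a consequence of the full ontology, whereas the corollary does not state this hypothesis explicitly. I expect this to be the only real subtlety. It can be handled by noting that the corollary is implicitly within the same running setting — where $\alpha$ is the consequence of interest and hence is entailed by $\axioms$ — or, more robustly, by observing that Lemma~\ref{lem:valuation} and Lemma~\ref{lem:pin-sat-adapted} already suffice to run the argument of Theorem~\ref{thm:correct-pinpointing} unchanged for any labelled consequence appearing in a saturated pinpointing state, without assuming membership in $\cproperty$ a priori; the equivalence of the two labels then follows purely from the fact that both characterize the same set of valuations $\{\valuation\mid \axioms_\valuation\models\alpha\}$. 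Either way, no further calculation is needed beyond the two applications of the correctness theorem.
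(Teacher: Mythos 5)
Your proposal is correct and matches the paper's intended argument: the corollary is presented as an immediate consequence of Theorem~\ref{thm:correct-pinpointing}, obtained by applying it to each saturated state so that both labels are pinpointing formulae for the same property and input, hence satisfied by exactly the same valuations and therefore equivalent. Your observation about the hypothesis $(\alpha,\axioms)\in\cproperty$ is a fair point of care, and your resolution (the theorem's proof establishes both directions of the biconditional for arbitrary valuations via Lemmas~\ref{lem:valuation} and~\ref{lem:pin-sat-adapted}) is sound.
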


To finalize this section, we consider again our running example of deciding subsumption in \ALC 
described in 
Section~\ref{sec:consequence-based-alg}.
It terminates after an exponential number of rule applications on
the size of the input TBox \Tmc. Notice that every pinpointing rule application requires an entailment test between
two monotone Boolean formulas, which can be decided in non-deterministic polynomial time on $|\Omc|$.
Thus, it follows from Theorem~\ref{thm:termination} 
that the pinpointing extension of the consequence-based algorithm for \ALC runs in exponential time.

%
\begin{corollary}
Let \Tmc be an \ALC TBox, and $C,D$ two \ALC concepts. A pinpointing formula for $C\sqsubseteq D$ w.r.t.\
\Tmc is computable in exponential time.
\end{corollary}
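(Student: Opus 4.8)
The plan is to assemble the final corollary from the three results already established earlier in the excerpt, together with the known complexity of the classical \ALC consequence-based algorithm. First I would recall that for a general \ALC subsumption query $C\sqsubseteq D$ w.r.t.\ a TBox \Tmc, the (unmodified) consequence-based algorithm of Section~\ref{sec:consequence-based-alg} terminates after at most $f(\Tmc)$ rule applications, where $f$ is a function bounded exponentially in $|\Tmc|$; this is the standard ExpTime bound for the method of~\cite{SiKH-IJCAI11}. I would also note that the set of derivable consequences $\delta(\Tmc)$ contains the relevant GCIs $H\sqsubseteq M$ and $H\sqsubseteq N\sqcup\exists r.K$, so after normalizing $C\sqsubseteq D$ into the TBox in the usual way the target axiom is derivable, and correctness (soundness and completeness relative to $\delta$) applies to it.

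Next I would invoke Theorem~\ref{thm:termination}: since \algorithm stops after at most $f(\Tmc)$ rule applications, its pinpointing extension $\algorithm^{pin}$ stops after at most $2^{|\Tmc|}f(\Tmc)$ pinpointing rule applications. As $f(\Tmc)$ is already exponential in $|\Tmc|$, the product $2^{|\Tmc|}f(\Tmc)$ remains exponential in $|\Tmc|$, hence in the size of the whole input. It then remains to bound the cost of a single pinpointing rule application. Here I would observe that executing one pinpointing rule application requires: (i) testing pinpointing applicability, i.e.\ deciding the entailment $\fm{\Bmc_0}{\Amc^{pin}}\not\models\fm{\Bmc_1}{\Amc^{pin}}$ between two monotone Boolean formulae, which is decidable in (co-)NP in the size of those formulae; and (ii) updating the labels, which is a disjunction of polynomial size. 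Since the formulae labelling consequences are monotone Boolean formulae over $\lab{\Tmc}$, built up by at most $2^{|\Tmc|}f(\Tmc)$ rule applications each adding a disjunct of bounded size, their size is at most exponential in $|\Tmc|$; so each individual step costs at most exponential time.

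Multiplying an exponential number of rule applications by an exponential per-step cost still yields an exponential-time bound overall, which gives the claimed result: a pinpointing formula for $C\sqsubseteq D$ w.r.t.\ \Tmc --- namely, by Theorem~\ref{thm:correct-pinpointing}, the label $\pinpointing{\consequence}$ of the target consequence in any pinpointing saturated state reachable from $\Tmc^{pin}$ --- is computable in exponential time. (Strictly, one should also fold in the normalization phase, but normalization is a polynomial preprocessing step and the normalization rules introduce only finitely many fresh labels, so it does not affect the bound; this is the point deferred to ``the next section'' in the paper.)

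The main obstacle --- really the only subtlety --- is controlling the size of the monotone Boolean formulae that label consequences, since a naive disjunction at every step could in principle blow up, and the entailment test's cost depends on that size. The clean way around this is the counting argument already made before Theorem~\ref{thm:termination}: there are only $2^{|\Tmc|}$ valuations over $\lab{\Tmc}$, so each label can be generalized at most $2^{|\Tmc|}$ times, bounding both the number of steps and (together with the bounded size of each new disjunct) the eventual formula size by an exponential. With that in hand the arithmetic is routine: exponential $\times$ exponential $=$ exponential, and the corollary follows. One could additionally remark that the ExpTime bound is worst-case optimal, since already deciding plain \ALC subsumption is ExpTime-hard, so the pinpointing extension does not increase the complexity class.
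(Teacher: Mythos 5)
Your proof follows essentially the same route as the paper: the paper likewise derives the corollary by combining the exponential bound on rule applications of the classical \ALC algorithm with Theorem~\ref{thm:termination} and the observation that each pinpointing applicability check is an entailment test between monotone Boolean formulae. Your additional care in bounding the size of the label formulae (so that the per-step entailment test is itself at most exponential) is a welcome tightening of a point the paper passes over quickly, but it does not change the argument's structure.
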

 
\section{Dealing with Normalization}

Throughout the last two sections, we have disregarded the first phase of the consequence-based algorithms
in which the axioms in the input ontology are transformed into a suitable normal form. In a nutshell, the
normalization phase takes every axiom in the ontology and substitutes it by a set of simpler axioms that are,
in combination, equivalent to the original one w.r.t.\ the set of derivable consequences. For example, in \ALC the
axiom $A\sqsubseteq B\sqcap C$ is \emph{not} in normal form. During the normalization phase, it would then
be substituted by the two axioms $A\sqsubseteq B$, $A\sqsubseteq C$, which in combination provide the exact
same constraints as the original axiom.

Obviously, in the context of pinpointing, we are interested in finding the set of \emph{original} axioms that
cause the consequence of interest, and not those in normal form; in fact, normalization is an internal process
of the algorithm, and the user should be agnostic to the internal structures used. Hence, we need to find a way
to track the original axioms. 

To solve this, we slightly modify the initialization of the pinpointing extension. Recall from the previous section that,
if the input ontology is already in normal form, then we initialize the algorithm with the state that contains exactly
that ontology, where every axiom is labelled with the unique propositional variable that represents it. If the 
ontology is not originally in normal form, then it is first normalized. In this case, we set as the initial state the
newly normalized ontology, but every axiom is labelled with the disjunction of the variables   
representing the axioms that 
generated it. The following example explains this idea.

\begin{example}
Consider a variant $\Texa'$ of the \ALC TBox from Example~\ref{exa:cons} that is now formed by the three axioms
\begin{align*}
A\sqsubseteq\exists r.A\sqcap\forall r.B: {} & \axp1, &
\exists r.A\sqsubseteq B: {} & \axp2, \\ &&
A\sqcap B\sqsubseteq\bot: {} & \axp4.
\end{align*}
Obviously, the first axiom \axp1 is not in normal form, but can be normalized by substituting it with the two
axioms $A\sqsubseteq\exists r.A$, $A\sqsubseteq\forall r.B$. Thus, the normalization step yields the same
TBox \Texa from Example~\ref{exa:cons}. However, instead of using different propositional variables to label
these two axioms, they just inherit the label from the axiom that generated them; in this case \axp1. Thus,
the pinpointing algorithm is initialized with
\[
\Amc^{pin}=\{
 A\sqsubseteq\exists r.A:\axp1,
 \exists r.A\sqsubseteq B:\axp2,
 A\sqsubseteq\forall r.B:\axp1,
 A\sqcap B\sqsubseteq\bot:\axp4
\}.
\]
Following the same process as in Example~\ref{exa:ppa}, we see that we can derive the consequence
$A\sqsubseteq\bot:\axp1\land\axp4$. Hence $\axp1\land\axp4$ is a pinpointing formula for $A\sqsubseteq\bot$
w.r.t.\ $\Texa'$. It can be easily verified that this is in fact the case.
\end{example}

Thus, the normalization phase does not affect the correctness, nor the complexity of the pinpointing extension
of a consequence-based algorithm.

\section{Conclusions}

We presented a general framework to extend consequence-based algorithms 
with axiom pinpointing. These algorithms often enjoy optimal upper bound 
complexity and can be efficiently implemented in practice. 
Our focus in this paper and use case is for the prototypical 
ontology language \ALC. 
We emphasize that this is only one of many consequence-based algorithms available. 
The completion-based algorithm for \elplus~\cite{baader-ijcai05} is obtained by restricting the assertions 
to be of the form
$A\sqsubseteq B$ and $A\sqsubseteq\exists r.B$ with $A,B\in N_C\cup\{\top\}$ and $r\in N_R$, 
and adding one rule to handle role constructors.
Other examples of consequence-based methods include LTUR approach for Horn clauses~\cite{minoux-ipl88}, 
and methods for more expressive 
and Horn DLs~\cite{Kaza09,horrocks-kr16,kazakov-jar14}. 

Understanding the axiomatic causes for a consequence, and in particular the pinpointing formula, has importance
beyond MinA enumeration. For example, the pinpointing formula also encodes all the ways to \emph{repair}
an ontology~\cite{arif-ki15}. Depending on the application in hand, a simpler version of the formula can be 
computed, potentially more efficiently. This idea has already been employed to find good approximations
for MinAs~\cite{BaaSun-KRMED-08} and lean kernels~\cite{PMIM17} efficiently.

As future work, it would be interesting to investigate how 
algorithms for query answering in an ontology-based data access setting 
can be extended with the pinpointing technique. The pinpointing formula 
in this case could also be seen as a provenance polynomial, as introduced 
by~Green et. al~\cite{Green07-provenance-seminal}, 
in database theory. Another direction is to investigate axiom pinpointing 
in decision procedures for non-monotonic reasoning, where one would also 
expect the presence of negations in the pinpointing formula.

%

\bibliographystyle{plain}
\bibliography{references,local}

\end{document}